\documentclass[conference]{IEEEtran}
\IEEEoverridecommandlockouts

\usepackage{cite}
\usepackage{amsmath,amssymb,amsfonts}
\usepackage{xcolor}
\usepackage{booktabs}
\usepackage{graphicx}
\def\BibTeX{{\rm B\kern-.05em{\sc i\kern-.025em b}\kern-.08em
    T\kern-.1667em\lower.7ex\hbox{E}\kern-.125emX}}

\newtheorem{theorem}{Theorem}
\newtheorem{definition}{Definition}

\newtheorem{example}{Example}

\newtheorem{corollary}{Corollary}
\newenvironment{proof}{\par \textit{Proof:}}

\usepackage{color}
\usepackage{subfig}
\usepackage{arydshln}
\usepackage{algorithm}
\usepackage{algpseudocode}

\usepackage{etoolbox}
\AtBeginEnvironment{algorithm}{\setlength{\intextsep}{3pt}}
\makeatletter
\g@addto@macro\normalsize{%
  \setlength{\abovedisplayskip}{6pt}%
  \setlength{\belowdisplayskip}{6pt}%
  \setlength{\abovedisplayshortskip}{4pt}%
  \setlength{\belowdisplayshortskip}{4pt}%
  \setlength{\arraycolsep}{3pt}   
  \renewcommand{\arraystretch}{0.9} 
}
\makeatother

\usepackage[top=0.75in,bottom=1in,left=0.625in,right=0.625in]{geometry} 

\begin{document}







\title{Universal Costas Matrices: Towards a General Framework for Costas Array Construction\\
\thanks{Fatih Gulec was supported by EU Horizon Europe under the Marie Skłodowska-Curie COFUND grant No 101081327 YUFE4Postdocs.}
}

\author{\IEEEauthorblockN{Fatih Gulec}
\IEEEauthorblockA{\textit{School of Computer Science and Electronic Engineering} \\
\textit{University of Essex, UK, f.gulec@essex.ac.uk}}\\
\and
\IEEEauthorblockN{Vahid Abolghasemi}
\IEEEauthorblockA{\textit{School of Computer Science and Electronic Engineering} \\
\textit{University of Essex, UK, v.abolghasemi@essex.ac.uk}}\\
}

\IEEEaftertitletext{\vspace{-1.25cm}} 
\maketitle

\begin{abstract}
Costas arrays are a special type of permutation matrices with ideal autocorrelation and low cross-correlation properties, making them valuable for radar, wireless communication, and integrated sensing and communication applications. This paper presents a novel unified framework for analyzing and discovering new Costas arrays. We introduce Universal Costas Matrices (UCMs) and Universal Costas Frequency Matrices (UCFMs) and investigate their structural characteristics. A framework integrating UCMs and UCFMs is proposed to pave the way for future artificial intelligence-assisted Costas array discovery. Leveraging the structural properties of UCMs and UCFMs, a reconstruction-based search method is developed to generate UCMs from UCFMs. Numerical results demonstrate that the proposed approach significantly accelerates the search process and enhances structural insight into Costas array generation.
\end{abstract}


\section{Introduction}
Costas arrays are permutation matrices in which the vectors connecting every pair of 1s are distinct. They were originally proposed to be used as frequency-hopping codes in radar/sonar systems \cite{costas1984study}. They exhibit ideal autocorrelation properties, which are desirable for low probability of intercept radars to mitigate Doppler shift effects \cite{gulecc2015usage, yao2024optimum}. These arrays are widely applied in modern radar waveform design research \cite{touati2017multi, correll2019costas,  xing2025phase}. 

On the other hand, Costas arrays have been employed in multiple access spread spectrum wireless communication systems such as orthogonal frequency division multiplexing systems \cite{maric1992class, maric2006using}. Additionally, recent studies show that modified versions of Costas arrays can have low cross-correlation and optimal frequency hopping properties, as in orthogonal pattern division multiple access systems \cite{yao2024orthogonal, yao2019algebraic, yao2024optimum}. These features make Costas arrays attractive candidates for integrated sensing and communication (ISAC) applications \cite{xing2025phase}, with their use in multi-user systems offering a potential solution to the sensing performance improvement challenge discussed in \cite{lu2024integrated}.


While the application areas of Costas arrays are broad, their generation techniques are limited. Conventionally, they can be generated with algebraic methods using Galois fields \cite{golomb1984constructions, drakakis2006review, golomb2007status} and their structural properties are extensively studied \cite{beard2020singular, jedwab2017costas, warnke2022density, torres2023multidimensional}. Researchers also employed exhaustive search techniques for the enumeration of Costas arrays \cite{drakakis2008results, drakakis2011enumeration, drakakis2011results, beard2007costas, russo2010costas}. To the best of our knowledge, all Costas arrays have been enumerated only up to order $29$ \cite{drakakis2011results}. All known Costas arrays, whether obtained algebraically or through exhaustive search, are collected in a publicly available database \cite{beard2017}. Such a database could potentially be employed to enable artificial intelligence (AI) based Costas array generation similar to those used in matrix completion problems \cite{yoon2018gain, wang2023generative}.  



Despite these efforts, no general construction method for Costas arrays of any arbitrary order has been discovered to date. On the one hand, algebraic construction methods remain largely restricted to prime or prime-power orders. On the other hand, exhaustive search rapidly becomes infeasible due to its factorial growth in complexity. These ongoing challenges have motivated us to develop a unified analytical framework that enables the application of AI-assisted search and generative modeling approaches toward generalized Costas array generation. Moreover, such a unified representation could leverage existing datasets of known arrays to facilitate the prospective AI-based generation of new Costas arrays.

In this paper, a unified approach to analyze and search for new Costas arrays is proposed. We define Universal Costas Matrices (UCMs), where each row contains a distinct Costas array of order $n$ in canonical form, and Universal Costas Frequency Matrices (UCFMs), which are $n \times n$ matrices that compactly represent the frequency of each element in the corresponding UCM. The contributions of the paper can be summarized as follows:
\begin{itemize}
    \item UCMs and UCFMs are introduced as novel matrix representations for the analysis and search of Costas arrays.
    \item A theoretical analysis of the structural properties of UCMs and UCFMs is presented.
    \item An AI-enabled framework is proposed, where UCMs and UCFMs can be integrated into future generative AI-based search methods for Costas array discovery.
    \item A novel search method, i.e., the reconstruction of UCMs from UCFMs, which exploits the structural properties of UCMs and UCFMs, is proposed. 
\end{itemize}

Furthermore, UCFM heatmaps are visualized to illustrate their structural symmetry, and the proposed reconstruction method is quantitatively evaluated. The results demonstrate that the proposed approach yields significant runtime improvement when compared with a benchmark method, while providing a foundation for future AI-based Costas array generation.

\section{Universal Costas Matrices} \label{UCM}
In this section, universal Costas matrices (UCMs) and their structural properties are presented.
\subsection{Preliminaries}
\begin{definition}
    A Costas array of order $n$ is defined as an $n \times n$ permutation matrix that satisfies the condition of having distinct vectors connecting every two 1's. The row indices of this permutation matrix give the Costas array.
\end{definition}


\begin{definition} \label{U_n}
Let $C(n)$ be the number of all possible distinct Costas arrays of order $n$. A UCM of order $n$, denoted by $U_n$, is a $C(n) \times n$ matrix comprising all distinct Costas arrays of order $n$ in each row. The rows of $U_n$ are given canonically such that the first element of each row takes integer values from $1$ to $n$, and the remaining elements in each row are arranged accordingly to preserve canonical ordering.
The matrix is divided row-wise into $n$ consecutive blocks, where the $j^{th}$ block contains all Costas arrays whose first element is $j$. Thus, $U_n$ can be shown as
\begingroup
\setlength{\arraycolsep}{3pt} 
\renewcommand{\arraystretch}{0.5} 
\begin{equation}
U_n =
\left[
\begin{array}{cccc}
c_{1,1} & c_{1,2} & \cdots & c_{1,n} \\
\vdots & \vdots & \ddots & \vdots \\
c_{r_1,1} & c_{r_1,2} & \cdots & c_{r_1,n} \\ 
\hdashline[2pt/2pt]
c_{r_1+1,1} & c_{r_1+1,2} & \cdots & c_{r_1+1,n} \\
\vdots & \vdots & \ddots & \vdots \\
c_{r_2,1} & c_{r_2,2} & \cdots & c_{r_2,n} \\ 
\hdashline[2pt/2pt]
\vdots \\[-2pt]
\hdashline[2pt/2pt]
c_{r_{n-1}+1,1} & c_{r_{n-1}+1,2} & \cdots & c_{r_{n-1}+1,n} \\ 
\vdots & \vdots & \ddots & \vdots \\
c_{C(n),1} & c_{C(n),2} & \cdots & c_{C(n),n}
\end{array}
\right],
\end{equation}
\endgroup
where dashed lines separate canonical blocks, and $r_j$ denotes the last row index of the $j^{th}$ block corresponding to Costas arrays whose first element equals $j$. Formally, a UCM can be denoted as $U_n = \mathcal{U}(n)$, where $\mathcal{U}(\cdot)$ is the \textit{UCM construction operator} that maps the set of all distinct Costas arrays of order $n$ to a canonical matrix representation,
\begin{equation}
   \mathcal{U}(n) : \{\text{All Costas arrays of order } n\} \rightarrow \mathbb{N}^{C(n) \times n}, 
\end{equation}
such that the rows of $\mathcal{U}(n)$ are ordered by their first element in ascending order, i.e.,$ \quad U_n(r_1,1) < U_n(r_2,1), \forall\, r_1 < r_2 $ to ensure a canonical block structure.
\end{definition}

\begin{example} \label{ex_U_n}
    $U_4$ can be shown in canonical order with $4$ blocks separated by horizontal dashed lines as follows:
    \begin{equation}
        U_4 =
        \left[
        \begin{array}{cccc}
            1 & 2 & 4 & 3 \\
            1 & 3 & 4 & 2 \\
            1 & 4 & 2 & 3 \\[3pt]
            \hdashline[2pt/2pt]
            2 & 1 & 3 & 4 \\
            2 & 3 & 1 & 4 \\
            2 & 4 & 3 & 1 \\[3pt]
            \hdashline[2pt/2pt]
            3 & 1 & 2 & 4 \\
            3 & 2 & 4 & 1 \\
            3 & 4 & 2 & 1 \\[3pt]
            \hdashline[2pt/2pt]
            4 & 1 & 3 & 2 \\
            4 & 2 & 1 & 3 \\
            4 & 3 & 1 & 2 \\
        \end{array}
        \right].
    \end{equation}

\end{example}
\begin{definition}
    Based on the Definition \ref{U_n}, $S(n,k)$ is defined as the sum of the $k^{th}$ column of $U_n$ with elements $c_{m,k}$ where $k \in {1, 2, ..., n}$ as given by \vspace{-0.3cm}
    \begin{equation} \vspace{-0.2cm}
        S(n, k) = \sum_{m=1}^{C(n)} c_{m,k}.
    \end{equation}
\end{definition}

In addition, it is useful to recall the elementary identity that for any permutation $\pi$ of $\{1,\dots,n\}$ we have $\sum_{k=1}^n \pi(k)=\frac{n(n+1)}{2}$. Since a Costas array is a permutation, the sum of any Costas array of order $n$ or the sum of the elements in any row in $U_n$ ($D(n)$) is given by \vspace{-0.3cm}
\begin{equation} \label{D_n} 
    D(n) = \frac{n (n+1)}{2}.
\vspace{-0.3cm}
\end{equation}

\begin{table}[t!]
\centering
\caption{Number of Costas Arrays and Row/Column Sums in Universal Costas Matrices}
\label{Table_C_S}
\renewcommand{\arraystretch}{1.15}
\setlength{\tabcolsep}{6pt}
\begin{tabular}{cccc|cccc}
\hline
\multicolumn{4}{c|}{\textbf{Orders 3--16}} & \multicolumn{4}{c}{\textbf{Orders 17--29}} \\ 
\hline
$\mathbf{n}$ & $\mathbf{C(n)}$ & $\mathbf{D(n)}$ & $\mathbf{S(n)}$ & 
$\mathbf{n}$ & $\mathbf{C(n)}$ & $\mathbf{D(n)}$ & $\mathbf{S(n)}$ \\ 
\hline
3  & 4    & 6   & 8     & 17 & 18276 & 153 & 164484 \\
4  & 12   & 10  & 30    & 18 & 15096 & 171 & 143412 \\
5  & 40   & 15  & 120   & 19 & 10240 & 190 & 102400 \\
6  & 116  & 21  & 406   & 20 & 6464  & 210 & 67872  \\
7  & 200  & 28  & 800   & 21 & 3536  & 231 & 38896  \\
8  & 444  & 36  & 1998  & 22 & 2052  & 253 & 23598  \\
9  & 760  & 45  & 3800  & 23 & 872   & 276 & 10464  \\
10 & 2160 & 55  & 11880 & 24 & 200   & 300 & 2500   \\
11 & 4368 & 66  & 26208 & 25 & 88    & 325 & 1144   \\
12 & 7852 & 78  & 51038 & 26 & 56    & 351 & 756    \\
13 & 12828 & 91 & 89796 & 27 & 204   & 378 & 2856   \\
14 & 17252 & 105 & 129390 & 28 & 712 & 406 & 10324 \\
15 & 19612 & 120 & 156896 & 29 & 164 & 435 & 2460  \\
16 & 21104 & 136 & 179384 &  --   &   --   &   --  &  --    \\
\hline
\end{tabular}
\end{table}

\subsection{Structural Properties and Theoretical Results}

\begin{theorem} \label{Theorem_UCM}
In a universal Costas matrix $U_n$, all columns have equal sums, i.e.,
\begin{equation}
S(n,k)=S(n), \qquad \forall k\in\{1,\dots,n\}.
\end{equation}
\end{theorem}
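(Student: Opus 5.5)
We will exploit the symmetry group of Costas arrays rather than any explicit enumeration. The key observation is that the set of all Costas arrays of order $n$ is closed under the \emph{horizontal flip} (reversal) $\pi \mapsto \pi'$, $\pi'(k)=\pi(n+1-k)$, and under the \emph{vertical flip} (complement) $\pi \mapsto \bar\pi$, $\bar\pi(k)=n+1-\pi(k)$. Both maps are involutions on permutations, so each restricts to a bijection of the set of rows of $U_n$. First we will verify closure under the vertical flip. If the displacement vectors between pairs of 1's in $\pi$ are $(j-i,\pi(j)-\pi(i))$, then in $\bar\pi$ they become $(j-i,-(\pi(j)-\pi(i)))$. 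Negating the second coordinate is injective, so distinctness is preserved.

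With this in hand, we will use the vertical flip to compute each column sum directly. Since complementation permutes the rows of $U_n$, the multiset of values in column $k$ is invariant under $x\mapsto n+1-x$. Pairing each row $\pi$ with $\bar\pi$ gives
\begin{equation*}
2S(n,k)=\sum_{\pi}\bigl(\pi(k)+\bar\pi(k)\bigr)=C(n)(n+1).
\end{equation*}
This holds even when some row might be fixed by the pairing; no such row exists anyway, since $\pi(k)=n+1-\pi(k)$ for all $k$ is impossible. Hence
\begin{equation*}
S(n,k)=\frac{C(n)(n+1)}{2}
\end{equation*}
for every $k$, which is independent of $k$. This proves the theorem and identifies $S(n)=C(n)(n+1)/2$, consistent with Table~\ref{Table_C_S}; for example, $n=4$ gives $12\cdot5/2=30$. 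As an alternative check, we could instead sum all entries of $U_n$ using \eqref{D_n}, which gives $\sum_k S(n,k)=C(n)D(n)$, and combine this with equality of the column sums. The direct pairing argument, however, avoids needing that equality separately.

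The main obstacle is mostly a matter of care. We must make sure the closure argument is stated for the full set of Costas arrays rather than depending on the canonical block ordering of Definition~\ref{U_n}. Since $S(n,k)$ is a sum over all rows, reordering rows is irrelevant. We should also note that $C(n)(n+1)$ is even, which follows automatically from the pairing: for $n$ even, the involution has no fixed points, so $C(n)$ must be even.
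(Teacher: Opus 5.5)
Your proof is correct and rests on the same mechanism as the paper's: the complement map $\pi\mapsto\bar\pi$ pairs column entries that sum to $n+1$. The paper applies this pairing inside each dihedral polymorph orbit of size $4$ or $8$, citing the literature for the orbit structure. You instead verify closure under complementation yourself and use it once as a bijection on all rows of $U_n$. This removes the orbit-size bookkeeping and gives the explicit value $S(n)=C(n)(n+1)/2$ (the paper's Corollary~\ref{Corollary_Sn_Cn_short}) in the same step.
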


\begin{proof}
Each Costas array comes in sets of four or eight since they belong to an equivalence class of polymorphs generated by the eight reflections and rotations \cite{drakakis2006review}. Depending on diagonal symmetry, each polymorph set, i.e., orbit, has four or eight members obtained by $90^\circ$ rotations or reflections of a Costas array permutation $\pi$ and its inverse $\pi^{-1}$. For a fixed column $k$, four rotated polymorphs can be expressed as
\begin{equation} \label{eq:rotations}
\setlength{\jot}{1pt} 
\begin{aligned}
\pi^{(1)}(k)&=\pi(k),\!\!  &\pi^{(2)}(k)&=n{+}1{-}\pi(k),\\[-1pt]
\pi^{(3)}(k)&=\pi(n{+}1{-}k),\!\! &\pi^{(4)}(k)&=n{+}1{-}\pi(n{+}1{-}k),
\end{aligned}
\end{equation}
where the other four reflection polymorphs are obtained by replacing $\pi(k)$ with $\pi^{-1}(k)$. Each pair within the orbit is complementary:
\begin{equation}
\pi^{(i)}(k)+\pi^{(i+1)}(k)=n+1, \qquad i\in\{1,3,5,7\}.
\end{equation}
Hence, the column sum over one set of polymorphs is
\begin{equation}
\sum_{t\in O(\pi)} \pi^{(t)}(k)=\frac{|O(\pi)|}{2}(n+1),
\end{equation}
where $|O(\pi)|\!\in\!\{4,8\}$ is the orbit size depending on the diagonal symmetry. This value depends only on $n$, not on $k$. Thus, stacking all polymorphs to form $U_n$ yields
\begin{equation}
S(n,k)=S(n), \quad \forall k.
\end{equation}
\end{proof}

\begin{corollary} \label{Corollary_Sn_Cn_short}
The ratio of the sum of a column in a UCM of order $n$ to the number of Costas arrays of order $n$ is given by
\begin{equation}
    \frac{S(n)}{C(n)}=\frac{n+1}{2}.
\end{equation}
\end{corollary}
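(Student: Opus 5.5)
The plan is to compute the total sum of all entries of $U_n$ in two ways and then use Theorem~\ref{Theorem_UCM}. Summing row by row, each of the $C(n)$ rows is a permutation of $\{1,\dots,n\}$, so by \eqref{D_n} every row sums to $D(n)=n(n+1)/2$. The total is therefore $C(n)\,D(n)=C(n)\,n(n+1)/2$. Summing column by column, the total is $\sum_{k=1}^{n} S(n,k)$. By Theorem~\ref{Theorem_UCM} every column sum equals $S(n)$, so the total is $n\,S(n)$. Equating the two expressions gives $n\,S(n)=C(n)\,n(n+1)/2$. Dividing by $n\ge 1$ and by $C(n)$, which is positive for the orders under consideration, yields $S(n)/C(n)=(n+1)/2$.

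As an independent check, I would derive the same ratio directly from the orbit argument used in the proof of Theorem~\ref{Theorem_UCM}. The set of all Costas arrays of order $n$ is invariant under the map $\pi\mapsto n+1-\pi$, and this map has no fixed points, since a fixed point would require $2\pi(k)=n+1$ for every $k$. Hence the rows of $U_n$ split into disjoint complementary pairs. Within each pair the entries in column $k$ sum to $n+1$, so $S(n,k)=\frac{C(n)}{2}(n+1)$ for every $k$. This route also proves Theorem~\ref{Theorem_UCM} again and avoids relying on the exact orbit sizes.

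There is no real obstacle here. The only care needed is to note that $C(n)>0$ and $n>0$ so the division is valid, and, in the pairing argument, that the complement map is an involution without fixed points on Costas arrays. As a sanity check, Table~\ref{Table_C_S} gives $S(4)/C(4)=30/12=5/2$, which matches $(n+1)/2$ for $n=4$.
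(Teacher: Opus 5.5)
Your main argument matches the paper's proof: count all entries of $U_n$ as $C(n)D(n)$ by rows and as $nS(n)$ by columns (using Theorem~\ref{Theorem_UCM}), then divide. Your complement-pairing check is a valid extra route that gives $S(n,k)=\frac{C(n)}{2}(n+1)$ directly, without needing orbit sizes. However, the map $\pi\mapsto n+1-\pi$ is fixed-point-free only for $n\ge 2$; for $n=1$ the array $(1)$ is its own complement, although the formula still holds trivially there.
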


\begin{proof}
Summing all entries of $U_n$ by rows gives $C(n)D(n)$, while summing by columns gives $nS(n)$. Equating and using $D(n)=\tfrac{n(n+1)}{2}$ in (\ref{D_n}) gives the result:
\begin{align} 
\setlength{\jot}{1pt} 
C(n)\frac{n(n+1)}{2} &= nS(n)  \\
\frac{S(n)}{C(n)} &= \frac{n+1}{2}. \setlength{\jot}{1pt} 
\end{align}
\setlength{\jot}{1pt} 
\end{proof}

\section{Universal Costas Frequency Matrices} \label{UCFM}
In this section, we present the universal Costas frequency matrices (UCFMs) and their symmetrical properties.
\begin{definition}
    Let $U_n$ be the UCM of order $n$. A UCFM of order $n$, denoted by $F_n$, is an $n \times n$ matrix that shows how frequently each number from $1$ to $n$ appears in each column of $U_n$. The $(i,k)^{th}$ element of $F_n$, i.e., $f_{i,k}$, counts the number of times the integer $i$ occurs in the $k^{th}$ column of $U_n$. 
    Formally, the UCFM is defined as $F_n = \mathcal{F}(U_n)$ where $\mathcal{F}(\cdot)$ is the \textit{UCFM transformation operator} that maps $U_n$ to its corresponding frequency matrix $F_n$, defined elementwise as
\begin{equation} \label{eq:F_def}
[\mathcal{F}(U_n)]_{i,k} = f_{i,k} = \#\{\,r \mid U_n(r,k) = i\,\},
\end{equation}
where $\#\{\,\cdot\,\}$ denotes the cardinality of a set, i.e., the number of elements satisfying a given condition.  For a better understanding, two samples of even and odd-numbered order UCFMs are given below.
\end{definition}

\begin{example} \label{ex_F_n}
    $F_6$ is given by
    \begin{equation}
        F_6 = 
        \begin{bmatrix}
            19 & 17 & 22 & 22 & 17 & 19 \\
            17 & 24 & 17 & 17 & 24 & 17 \\
            22 & 17 & 19 & 19 & 17 & 22 \\
            22 & 17 & 19 & 19 & 17 & 22 \\
            17 & 24 & 17 & 17 & 24 & 17 \\
            19 & 17 & 22 & 22 & 17 & 19
        \end{bmatrix},
    \end{equation}
    where the first row shows the number of $1$'s in each column of $U_6$, the second row shows the number of $2$'s in each column of $U_6$, and so on. Similarly, $F_5$ is also shown below.
    \begin{equation}
        F_5 = 
        \begin{bmatrix}
            6 & 10 & 8 & 10 & 6 \\
            10 & 6 & 8 & 6 & 10 \\
            8 & 8 & 8 & 8 & 8 \\
            10 & 6 & 8 & 6 & 10 \\
            6 & 10 & 8 & 10 & 6
        \end{bmatrix},
    \end{equation}
\end{example}
where for instance, $f_{2,2} = 6$ shows the number of $2$'s in column $2$ of $U_5$. $F_6$ and $F_5$ shown in Example \ref{ex_F_n} also illustrate the two different symmetries as generalized below.

\begin{theorem} \label{thm_UCFM_sym}
Let $F_n$ be the UCFM of order $n$. $F_n$ shows horizontal, vertical, and diagonal reflection symmetries. For odd $n$, the symmetry axes pass through the middle row and column, while for even $n$, they lie between them.
\end{theorem}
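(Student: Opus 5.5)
We prove the three symmetries of $F_n$ by showing that the set of all Costas arrays of order $n$ is closed under the dihedral operations used in the proof of Theorem \ref{Theorem_UCM}: vertical flip $\pi\mapsto n+1-\pi$, horizontal flip $\pi\mapsto \pi(n+1-\cdot)$, and transposition $\pi\mapsto\pi^{-1}$. Each such map is a bijection on the set of rows of $U_n$. The block ordering of $U_n$ does not matter here, because $f_{i,k}$ counts permutations and not positions. So we can write
\begin{equation}
f_{i,k}=\#\{\pi \text{ Costas of order } n : \pi(k)=i\},
\end{equation}
and transport this count along each bijection.

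First we verify closure, which is the only place where the Costas property is used. Let the displacement vectors of $\pi$ be $(k'-k,\ \pi(k')-\pi(k))$.
\begin{itemize}
\item Under $\pi\mapsto n+1-\pi$, each vector $(a,b)$ becomes $(a,-b)$.
\item Under $k\mapsto n+1-k$, each vector becomes $(-a,b)$ up to reordering the pair.
\item Under inversion, each vector becomes $(b,a)$.
\end{itemize}
Each of these maps on vectors is injective, so distinct vectors stay distinct. Hence the image of a Costas array is again a Costas array.

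Next we derive the three symmetries.
\begin{itemize}
\item \textbf{Horizontal axis.} The map $\pi\mapsto n+1-\pi$ sends $\{\pi:\pi(k)=i\}$ bijectively onto $\{\sigma:\sigma(k)=n+1-i\}$. Therefore $f_{i,k}=f_{n+1-i,k}$, which is reflection about the middle row.
\item \textbf{Vertical axis.} The map $\sigma(k)=\pi(n+1-k)$ gives $f_{i,k}=f_{i,n+1-k}$, which is reflection about the middle column.
\item \textbf{Diagonal.} Inversion gives $\pi(k)=i\iff\pi^{-1}(i)=k$, so $f_{i,k}=f_{k,i}$, meaning $F_n=F_n^{T}$. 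Composing with the two previous symmetries also gives $f_{i,k}=f_{n+1-k,n+1-i}$, which is symmetry about the anti-diagonal.
\end{itemize}

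Finally, we locate the axes. The reflection $i\mapsto n+1-i$ fixes the index $(n+1)/2$ when $n$ is odd, so the axis passes through the middle row and column. When $n$ is even it has no fixed index and pairs rows $n/2$ and $n/2+1$, so the axis lies between them.

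The main obstacle is to state cleanly that $f_{i,k}$ depends only on the set of Costas arrays and not on the canonical block ordering. The proof of Theorem \ref{Theorem_UCM} argues via orbit sums, whereas here we need exact bijections between the fibers $\{\pi(k)=i\}$. Working with individual bijections, instead of orbit sizes, avoids any case split on whether an orbit has $4$ or $8$ members.
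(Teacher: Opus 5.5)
Your proof is correct and uses the same idea as the paper's: the set of Costas arrays of order $n$ is invariant under the dihedral symmetries of the square (both flips and $\pi\mapsto\pi^{-1}$), and this invariance yields $f_{i,k}=f_{n+1-i,k}=f_{i,n+1-k}=f_{k,i}$ in $F_n$. The paper cites this closure and argues by superposing the binary matrices of each polymorph orbit, whereas you check closure directly from the displacement vectors and use exact bijections between fibers, which is cleaner and avoids the paper's case distinction between orbits of size $4$ and $8$.
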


\begin{proof}
As shown in Theorem \ref{Theorem_UCM}, each Costas array (each row of $U_n$) belongs to an equivalence class of polymorphs (rotations and reflections) of size $4$ or $8$ depending on diagonal symmetry \cite{drakakis2011results}. Let each array $\pi^{(r)}$ be represented by its 2-D binary form $B^{(r)}=[b^{(r)}_{i,k}]$ with $b^{(r)}_{i,k}=1$ if $\pi^{(r)}(i)=k$, and $0$ otherwise. $F_n$ can be defined as the superposition of all such matrices as given by
\begin{equation} \label{eq:F_sup}
F_n = \sum_r B^{(r)}, \qquad f_{i,k} = \sum_r b^{(r)}_{i,k}.
\end{equation}

For every $B^{(r)}$, its rotated and reflected polymorphs relocate the entries of $1$ to positions related by index transformations of the form
$(i,k)\mapsto(n{+}1{-}i,k),\ (i,n{+}1{-}k),\ (n{+}1{-}i,n{+}1{-}k),\ (k,i),$ and their complementary reflections across the two diagonals. 
As a result, every value $(i,k)$ in $F_n$ has equal contributions with its symmetric counterparts:
\begin{equation}
\setlength{\jot}{1pt} 
\begin{aligned}
f_{i,k}
&= \! f_{\,n+1-i,\,k}
= \! f_{\,i,\,n+1-k}
= \! f_{\,n+1-i,\,n+1-k}
= \! f_{k,\,i} \\[-2pt]
&= \! f_{\,n+1-k,\,i}
= \! f_{k,\,n+1-i}
= \! f_{\,n+1-k,\,n+1-i},
\quad \forall i,k.
\end{aligned}
\end{equation}
If a Costas array is symmetric along a diagonal (four polymorphs), some of these positions coincide, reducing the number of distinct equalities. Thus, the total contribution of all polymorphs remains balanced, and the element counts at symmetric positions in $F_n$ become equal.
\end{proof}

\begin{corollary} \label{cor:Cn_F}
    The sum of all elements in a row or column of $F_n$ gives the number of all Costas arrays of order $n$, i.e.,
    \begin{equation} \label{Cn_F}
        C(n) = 
        \begin{cases}
            \sum_{k=1}^{n} f_{k,m}, \forall m \\
            \sum_{m=1}^{n} f_{k,m}, \forall k.
        \end{cases}
    \end{equation}
\end{corollary}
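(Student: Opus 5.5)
The plan is a double-counting argument applied directly to the definition of $F_n$ in (\ref{eq:F_def}). The two claimed identities, one for column sums and one for row sums of $F_n$, need different reasons. The column identity follows from the fact that every column of $U_n$ has exactly $C(n)$ entries. The row identity follows from the fact that every row of $U_n$ is a permutation of $\{1,\dots,n\}$.

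\emph{Column sums.} Fix a column index $m$ of $F_n$, which corresponds to column $m$ of $U_n$. Each of the $C(n)$ entries $U_n(r,m)$ is an integer in $\{1,\dots,n\}$, so it is counted in exactly one of the sets $\{r \mid U_n(r,m)=k\}$ for $k=1,\dots,n$. These sets therefore partition the row index set $\{1,\dots,C(n)\}$, and summing their cardinalities over $k$ gives
\begin{equation*}
\sum_{k=1}^{n} f_{k,m} = C(n), \qquad \forall m.
\end{equation*}

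\emph{Row sums.} Fix a symbol $k$, which indexes a row of $F_n$. Here I would use the permutation property already invoked for (\ref{D_n}): each row of $U_n$ is a permutation of $\{1,\dots,n\}$, so the symbol $k$ occurs in exactly one column of each row $r$. Counting the pairs $(r,m)$ with $U_n(r,m)=k$ by rows gives $C(n)$, since each row contributes exactly one such pair. Counting the same pairs by columns gives $\sum_{m=1}^{n} f_{k,m}$. Equating the two counts yields the second identity, for all $k$. Equivalently, in the superposition form (\ref{eq:F_sup}), each matrix $B^{(r)}$ is a permutation matrix with all row and column sums equal to $1$, so both sums equal the number of summands, which is $C(n)$.

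Alternatively, the row case could be derived from the column case using the diagonal symmetry $f_{i,k}=f_{k,i}$ of Theorem~\ref{thm_UCFM_sym}. However, the direct permutation argument is cleaner and does not depend on that theorem. No real obstacle is expected. The only point needing care is the index convention of (\ref{eq:F_def}): the first index of $f_{i,k}$ is the symbol and the second is the column of $U_n$. With that fixed, it is clear which identity uses the column length of $U_n$ and which uses the permutation property of its rows.
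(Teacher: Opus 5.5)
Your proof is correct. Your column-sum argument matches the paper's: each column of $U_n$ has $C(n)$ entries, each a symbol in $\{1,\dots,n\}$. For the row sums you take a different route. The paper obtains them from the diagonal symmetry $f_{i,k}=f_{k,i}$ of Theorem~\ref{thm_UCFM_sym}, so that row $k$ of $F_n$ has the same sum as column $k$. You instead double count the pairs $(r,m)$ with $U_n(r,m)=k$, using only that every row of $U_n$ is a permutation.

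The paper's route is a one-liner once Theorem~\ref{thm_UCFM_sym} is available. However, it inherits that theorem's dependence on the set of Costas arrays being closed under transposition (i.e.\ inversion), and on its somewhat informal orbit argument. Your route is self-contained and more general. It holds for any collection of permutations stacked as rows, for example an incomplete $F^*_n$ built from a list not closed under polymorphs, or as a consistency check on a predicted $\hat F_n$.

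Your write-up also attaches each fact to the right identity. The paper cites ``each number $1$ to $n$ appears only once'' as the reason for the column sums, but that fact is what drives the row sums. The column sums need only the column length of $U_n$.
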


\begin{proof}
    Given that $F_n$ counts the frequency of each number in each column and it has $C(n)$ rows, the sum of each column in $F_n$ is $C(n)$, since each number $1$ to $n$ appears only once in a Costas array of order $n$. Additionally, the reflection symmetry of $F_n$ given in Theorem \ref{thm_UCFM_sym} provides that the sum of each column equals the sum of each row in $F_n$, which is also equal to $C(n)$. 
\end{proof}

\begin{corollary} 
    By using the elements of $F_n$, the sum of each column of $U_n$, i.e., $S(n)$, can be given by
    \begin{equation} \label{eq_S_F}
        S(n) = \sum_{k=1}^{n} k f_{k,m}, \hspace{0.1cm}\forall m
    \end{equation}
    where $f_{k,m}$ is the element of $F_n$ at the row $k$ and column $m$.
\end{corollary}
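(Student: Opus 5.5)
The identity follows by regrouping the column sum $S(n,m)$ of $U_n$ according to the value each entry takes, so that the frequencies $f_{k,m}$ of $F_n$ appear as multiplicities. Fix a column index $m\in\{1,\dots,n\}$. By definition, $S(n,m)=\sum_{r=1}^{C(n)} c_{r,m}$, and every entry $c_{r,m}$ is an integer in $\{1,\dots,n\}$ because each row of $U_n$ is a permutation of $\{1,\dots,n\}$. I would therefore partition the row index set $\{1,\dots,C(n)\}$ into the $n$ disjoint classes $R_k=\{\,r \mid U_n(r,m)=k\,\}$, for $k=1,\dots,n$. These classes cover every row, since each entry takes some value in $\{1,\dots,n\}$, and they are pairwise disjoint, since each entry takes exactly one value.

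Next I would rewrite the sum as
\begin{equation*}
S(n,m)=\sum_{k=1}^{n}\sum_{r\in R_k} c_{r,m}=\sum_{k=1}^{n} k\,|R_k|.
\end{equation*}
By the definition of the UCFM operator in (\ref{eq:F_def}), $|R_k|=f_{k,m}$, which gives $S(n,m)=\sum_{k=1}^{n} k f_{k,m}$ for each fixed $m$.

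Finally, I would invoke Theorem \ref{Theorem_UCM}, which states that $S(n,m)=S(n)$ for every $m$. Hence the right-hand side equals $S(n)$ regardless of the choice of $m$, and this is exactly (\ref{eq_S_F}).

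There is no real obstacle here; the only point needing care is that the partition by value is exhaustive and disjoint, which follows from the entries lying in $\{1,\dots,n\}$. As a consistency check, I would note that combining (\ref{eq_S_F}) with Corollary \ref{cor:Cn_F} recovers $S(n)/C(n)=(n+1)/2$ in the cases where $f_{k,m}=f_{n+1-k,m}$ holds, since the frequency-weighted average of the values is then $(n+1)/2$. This is consistent with Corollary \ref{Corollary_Sn_Cn_short}.
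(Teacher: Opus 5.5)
Your proof is correct and follows the same route as the paper: you group the entries of column $m$ of $U_n$ by value, so each $k$ is counted $f_{k,m}$ times, and then invoke Theorem~\ref{Theorem_UCM} to remove the dependence on $m$. Your explicit partition into the classes $R_k$ simply makes the paper's one-line counting argument precise.
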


\begin{proof}
Each $f_{k,m}$ counts how many times $k$ appears in column $m$ of $U_n$. 
Multiplying each count by $k$ and summing over all rows gives the column sum $S(n)$, which is the same for all $m$ (Theorem~\ref{Theorem_UCM}).
\end{proof}

The theoretical foundations given in Sections \ref{UCM} and \ref{UCFM} form the basis of the framework proposed in the next section.



\section{A General Framework for Costas Array Discovery: Reconstruction of UCMs from UCFMs}
\label{U/F}
This section introduces a general framework for Costas array discovery that integrates UCMs, UCFMs, and AI-based generative models. Within this framework, a novel search method based on reconstructing UCMs from UCFMs is proposed and detailed below.

\subsection{Motivation}
As described in Section \ref{UCFM}, UCFMs provide a compact way to represent all Costas arrays of a given order, with 2-D $n \hspace{-0.1cm} \times \hspace{-0.1cm} n$ unique matrices. They exhibit highly structured and symmetrical patterns, that can be exploited to reconstruct the underlying Costas arrays. Furthermore, they can also be represented as images via heatmaps (illustrated in Section \ref{RD}). 



With the advancement of AI techniques, subtle patterns in images can be distinguished. In particular, generative models such as generative adversarial networks \cite{goodfellow2014generative, yoon2018gain} have demonstrated strong capabilities in synthesizing or extrapolating structured data distributions \cite{wang2023generative}. However, AI has not yet been applied to Costas array discovery due to the lack of a well-defined representation that connects Costas arrays with data-driven learning. Considering the compact representation of UCFMs, there is significant potential to leverage AI-based generative modeling to predict UCFMs of higher orders, thereby facilitating the discovery of new Costas arrays. To this end, we propose a general framework that integrates UCMs, UCFMs, and AI methods, as illustrated in Fig. \ref{fig:Block_U_F_AI}. As shown in the upper branch of Fig. \ref{fig:Block_U_F_AI}, all enumerated Costas arrays via exhaustive search can be converted into UCFMs ($F_n$), which are complete since all Costas arrays of order $n \leq 29$ are known. Additionally, algebraic methods can be used to generate incomplete UCFMs ($F^*_n$) using the large Costas array database available up to order $n \leq 1030$ \cite{beard2017}, as shown in the lower branch of Fig. \ref{fig:Block_U_F_AI}. These complete and incomplete UCFMs can thus serve as structured training data for potential generative AI models.

The predicted UCFMs can be obtained as the output of such models, as shown with $\hat{F}_n$ in Fig. \ref{fig:Block_U_F_AI}. Then, these $\hat{F}_n$s can be used to reconstruct the corresponding UCM ($\hat{U}_n$) via the inverse operation $\mathcal{F}^{-1}(\cdot)$, which represents the reconstruction of a UCM from UCFM. This reconstruction perspective provides a more holistic and computationally tractable approach to discovering new Costas arrays, as it leverages frequency constraints rather than direct combinatorial enumeration. We leave the prediction of UCFMs of arbitrary order $n$ using generative AI models as future work, as it is non-trivial and beyond the scope of this paper. However, in this paper, we focus on the final block of Fig. \ref{fig:Block_U_F_AI}, $\mathcal{F}^{-1}(\cdot)$, and present our detailed algorithm for the reconstruction process.
\begin{figure}[b]
    \centering
    \includegraphics[width=0.99\columnwidth]{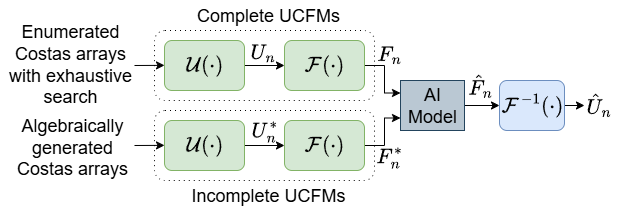}
    \caption{A general framework to discover new Costas arrays using the UCMs/UCFMs and AI.}
    \label{fig:Block_U_F_AI}
\end{figure}

\begin{algorithm}[tb]
\caption{Reconstruction of the Universal Costas Matrix $U_n$ from the Universal Costas Frequency Matrix $F_n$}
\label{alg:reconstructU}
\begin{algorithmic}[1]
\Require $n$: Order; $F_n$: Complete UCFM of size $n \times n$
\Ensure $U_n$: Canonical UCM of order $n$

\State $C(n) \gets \sum_{v=0}^{n-1} F_n[v, 0]$ \Comment{$C(n)$ from (\ref{Cn_F})}
\State Initialize empty matrix $U_n[C(n)][n]$
\State Initialize block sizes $B[v] \gets F_n[v,0]$ for $v = 0 \ldots n-1$
\State Compute block start indices $\text{start}[v]$ and counters $\text{next}[v] \gets \text{start}[v]$, initialize \textit{PlacedRows} $\gets \emptyset$
\For{$a_0 = 0$ \textbf{to} $n-1$} 
    \If{$B[a_0] = 0$} 
        \State \textbf{continue} \Comment{Block already filled}
    \EndIf
    \State $G \gets$ \Call{Russo\_Search}{$n, a_0$} \Comment{Search Costas arrays of order $n$ with first element $a_0$ with Russo's method \cite{russo2010costas}}
    \For{\textbf{each} canonical permutation $p$ \textbf{in} $G$}
        \If{$p \in$ \textit{PlacedRows}} 
            \State \textbf{continue} 
        \EndIf
        \State $\mathcal{O}(p) \gets$ \Call{Polymorphs}{$p$} \hspace{-1cm} \Comment{Generate polymorphs}
        \For{\textbf{each} $q$ \textbf{in} $\mathcal{O}(p)$}
            \If{$q \notin$ \textit{PlacedRows} \textbf{and} $B[q[0]] > 0$}
                \State Place $q$ into block at position $\text{next}[q[0]]$
                \State $\text{next}[q[0]] \gets \text{next}[q[0]] + 1$ 
                \State $B[q[0]] \gets B[q[0]] - 1$ 
                \State \textit{PlacedRows} $\gets$ \textit{PlacedRows} $\cup \{q\}$
            \EndIf
        \EndFor
        \If{$B[a_0] = 0$} 
            \State \textbf{break} \Comment{Block fully filled; move to next block}
        \EndIf
    \EndFor
\EndFor
\State \Return $U_n$
\end{algorithmic}
\end{algorithm}

\subsection{Reconstruction Algorithm}

The pseudocode of the reconstruction algorithm of a UCM from a complete UCFM, represented by the blue box in Fig. \ref{fig:Block_U_F_AI} with the operation $\mathcal{F}^{-1}(\cdot)$, is given in Algorithm \ref{alg:reconstructU}. In this algorithm, we first determine the number of Costas arrays of the given order $n$ from $F_n$ using Corollary \ref{cor:Cn_F}. Additionally, the block sizes and indices of $U_n$, where each block represents a canonical group of Costas arrays starting with a different element (see Definition \ref{U_n} and Example \ref{ex_U_n}), are determined using the first column of $F_n$. Then, a canonical exhaustive search is initiated using the benchmark Costas array search method proposed by Russo et al. \cite{russo2010costas} as implemented by \textit{$RUSSO\_SEARCH$($n, a0$)} in Algorithm \ref{alg:reconstructU}. Russo’s method employs symmetry reduction, bitmasking, forward checking (or look-ahead), and backtracking to accelerate the search. In particular, it encodes the difference triangles of the permutation using bitmasking, while the look-ahead process ensures that at least one valid future column position remains available without violating the uniqueness of the difference triangle. This allows pruning of infeasible branches early in the search.

Each Costas array generates up to $8$ polymorphs, resulting in $3$ or $7$ additional arrays depending on the diagonal symmetry \cite{drakakis2006review}. As applied in Russo's method, we generate polymorphs from each found Costas array. However, we treat them differently. Each generated polymorph is placed into the corresponding block as a row in $U_n$. When the corresponding block in $U_n$ is fully filled, we stop the search for that block and move to the next block. When all the blocks are filled based on the block sizes in $F_n$, the reconstruction is completed. Next, the performance of this reconstruction method is demonstrated, and the theoretical results are validated.

\begin{figure}[tb]
    \centering
    \includegraphics[width=0.99\columnwidth]{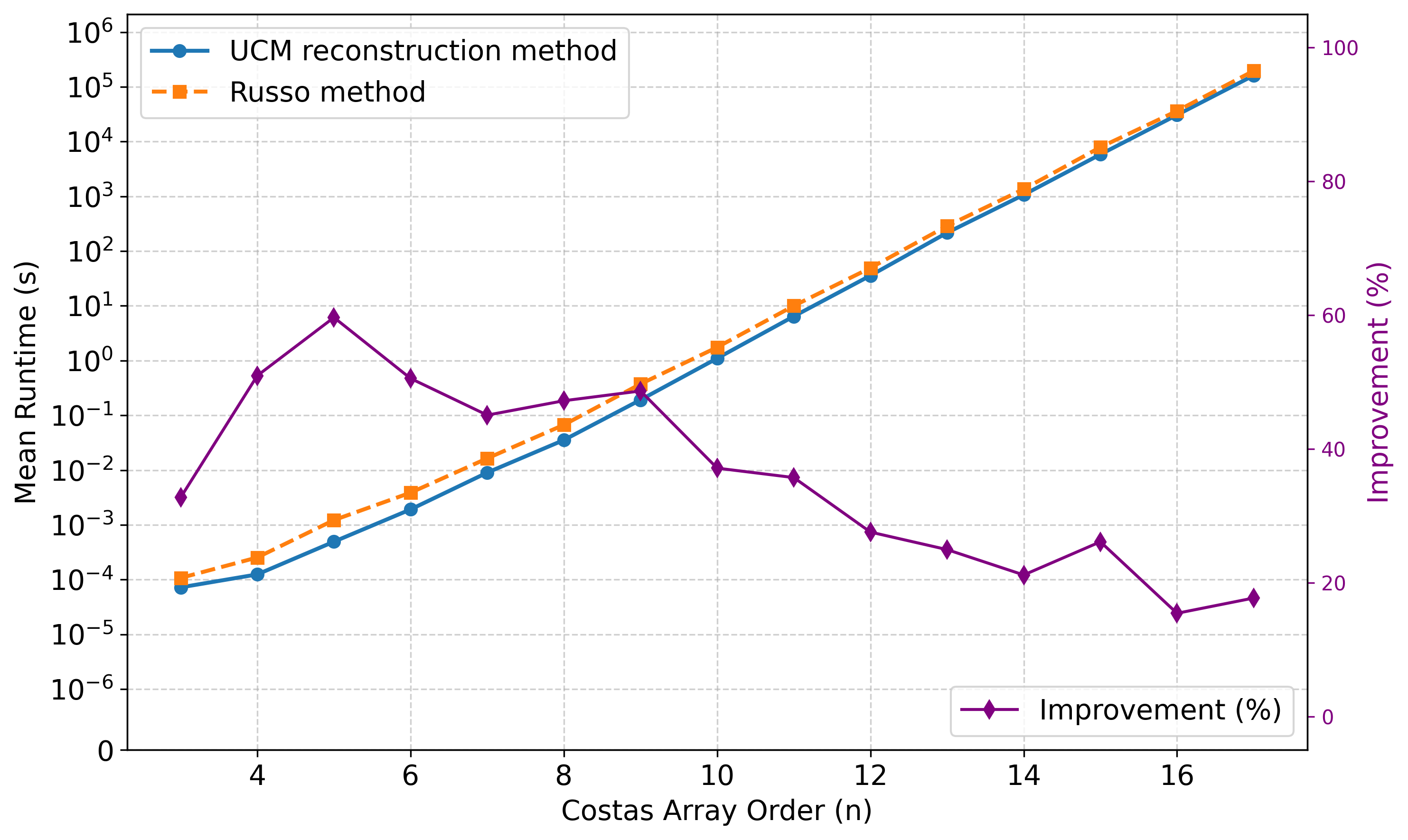}
    \caption{Benchmark comparison results of the proposed UCM reconstruction method with the Russo method.}
    \label{fig:U/F_Results}
\end{figure}

\section{Results} \label{RD}
\vspace{-0.2cm}
In this section, numerical results to illustrate the theoretical findings as well as the performance results of the proposed reconstruction method elaborated in Section \ref{U/F} are presented.

In Fig. \ref{fig:UCFM_all}, the complete UCFMs for $5 \leq n\leq 29$ are illustrated as images via heatmaps in the first 5 rows of the figure. These images are generated by applying (\ref{eq:F_def}) to the enumerated Costas arrays in \cite{beard2017, drakakis2011enumeration, drakakis2011results}. Additionally, incomplete UCFMs are generated in row 6 of Fig. \ref{fig:UCFM_all} using Beard's database of algebraically generated Costas arrays \cite{beard2017}. In Fig. \ref{fig:UCFM_all}, the color scale is normalized based on the maximum UCFM value in each image, where brighter and darker regions indicate higher frequencies and lower frequencies, respectively. Rows $1-5$ of this figure validate the horizontal, vertical, and diagonal reflection symmetries of UCFMs given in Theorem \ref{thm_UCFM_sym}. Incomplete UCFMs still show the same symmetries since the database includes the polymorphs of the algebraically constructed Costas arrays.

\begin{figure*}[t]
    \centering
    \setlength{\tabcolsep}{1pt} 
    \renewcommand{\arraystretch}{0} 
    \begin{tabular}{ccccc}
        \begin{minipage}{0.18\textwidth}\centering \textbf{$n=5$}\\[2pt]
        \includegraphics[width=\linewidth]{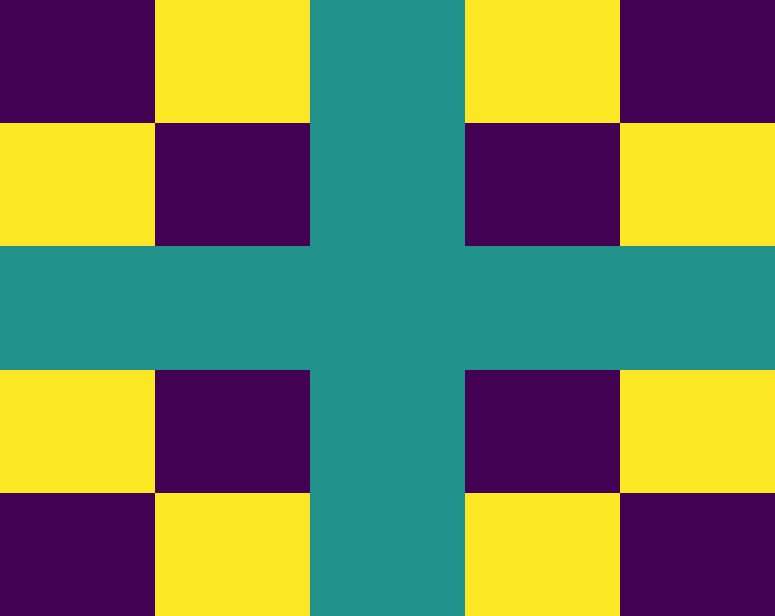}\end{minipage} &
        \begin{minipage}{0.18\textwidth}\centering \textbf{$n=6$}\\[2pt]
        \includegraphics[width=\linewidth]{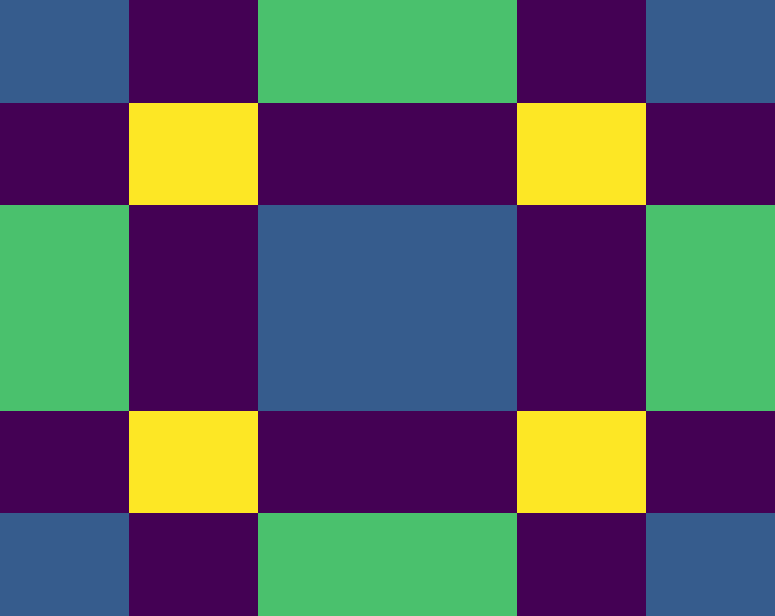}\end{minipage} &
        \begin{minipage}{0.18\textwidth}\centering \textbf{$n=7$}\\[2pt]
        \includegraphics[width=\linewidth]{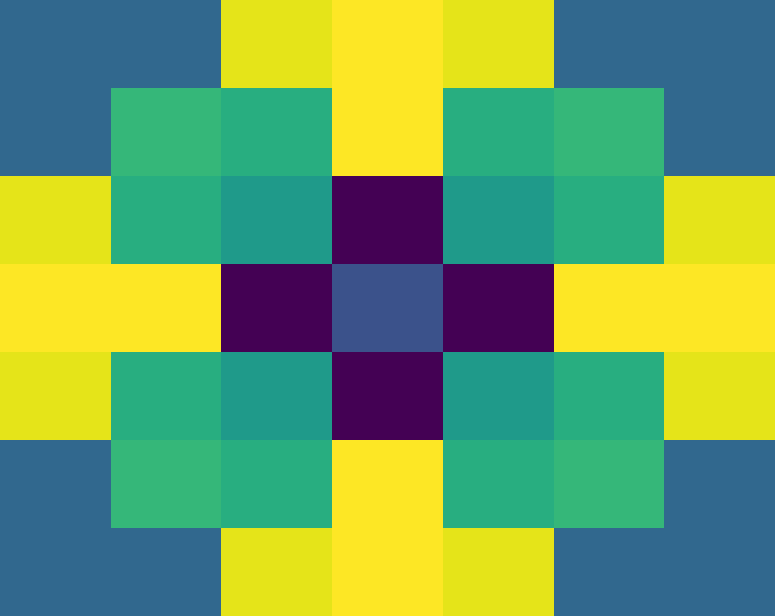}\end{minipage}  &     
        \begin{minipage}{0.18\textwidth}\centering \textbf{$n=8$}\\[2pt]
        \includegraphics[width=\linewidth]{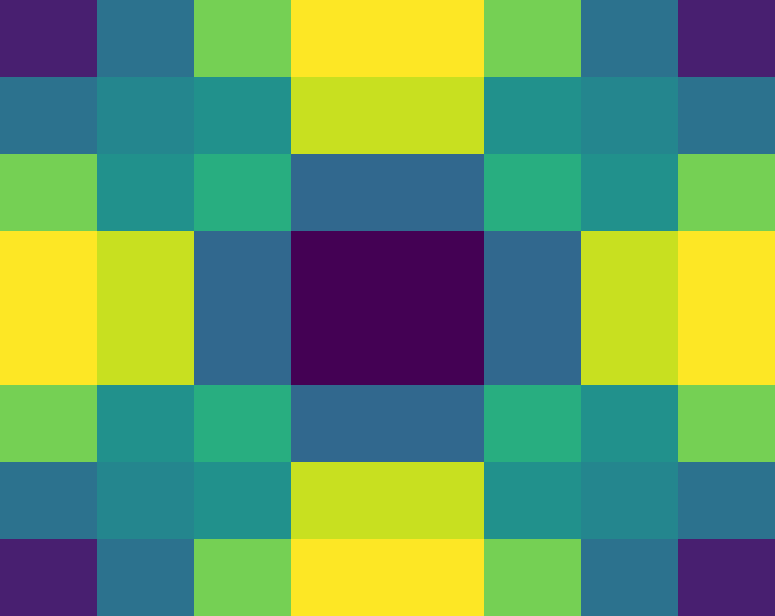}\end{minipage} &
        \begin{minipage}{0.18\textwidth}\centering \textbf{$n=9$}\\[2pt]
        \includegraphics[width=\linewidth]{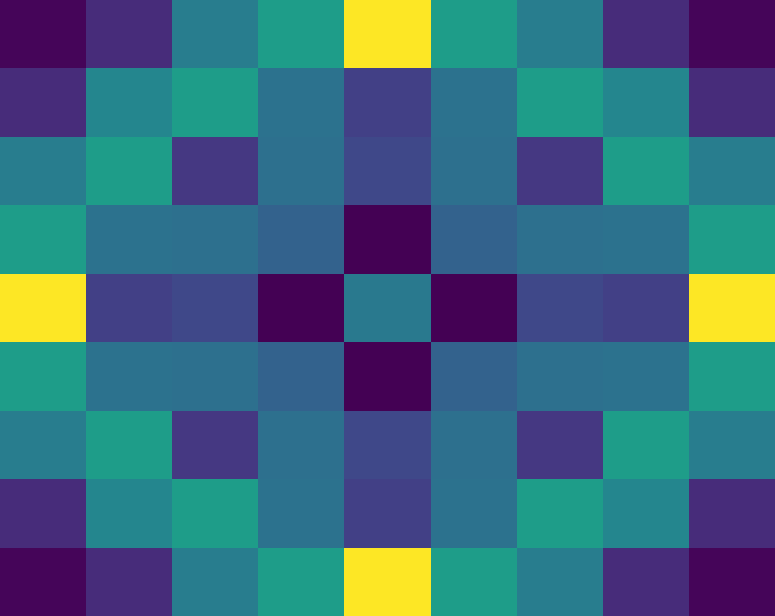}\end{minipage} \\
        \begin{minipage}{0.18\textwidth}\centering \textbf{$n=10$}\\[2pt]
        \includegraphics[width=\linewidth]{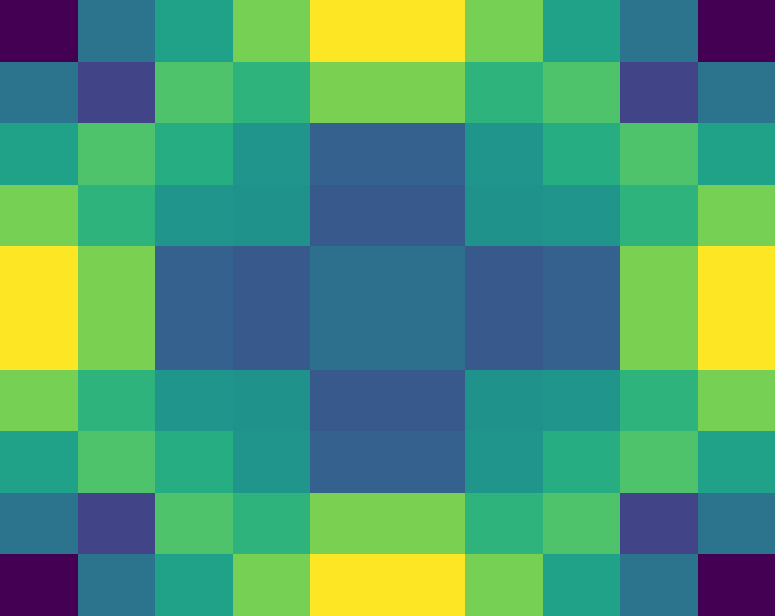}\end{minipage} &
        \begin{minipage}{0.18\textwidth}\centering \textbf{$n=11$}\\[2pt]
        \includegraphics[width=\linewidth]{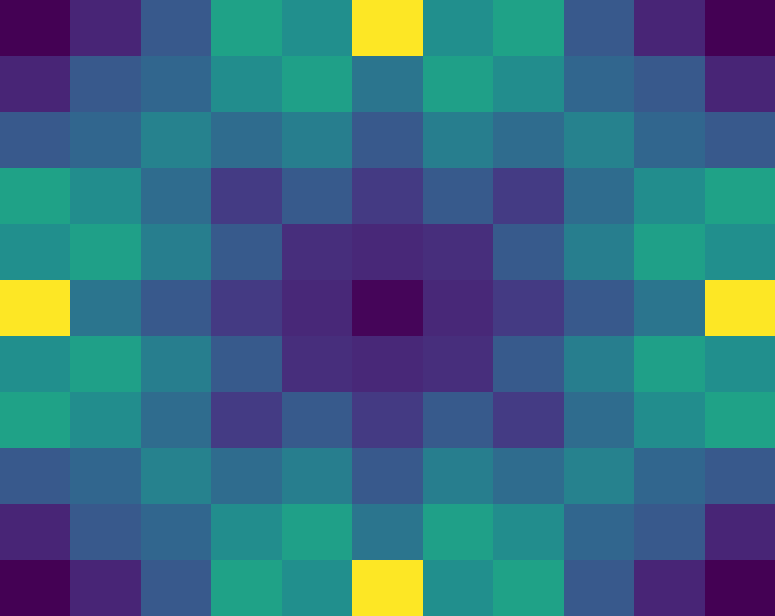}\end{minipage} &
        \begin{minipage}{0.18\textwidth}\centering \textbf{$n=12$}\\[2pt]
        \includegraphics[width=\linewidth]{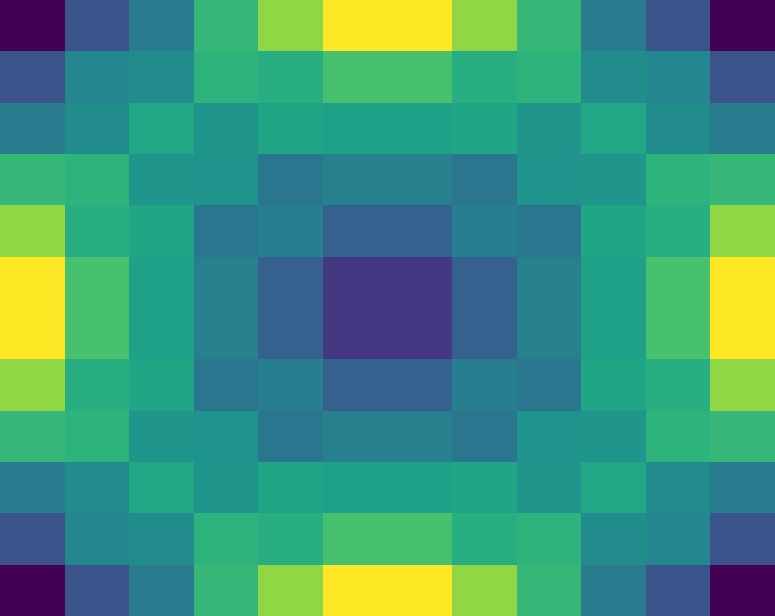}\end{minipage} &
        \begin{minipage}{0.18\textwidth}\centering \textbf{$n=13$}\\[2pt]
        \includegraphics[width=\linewidth]{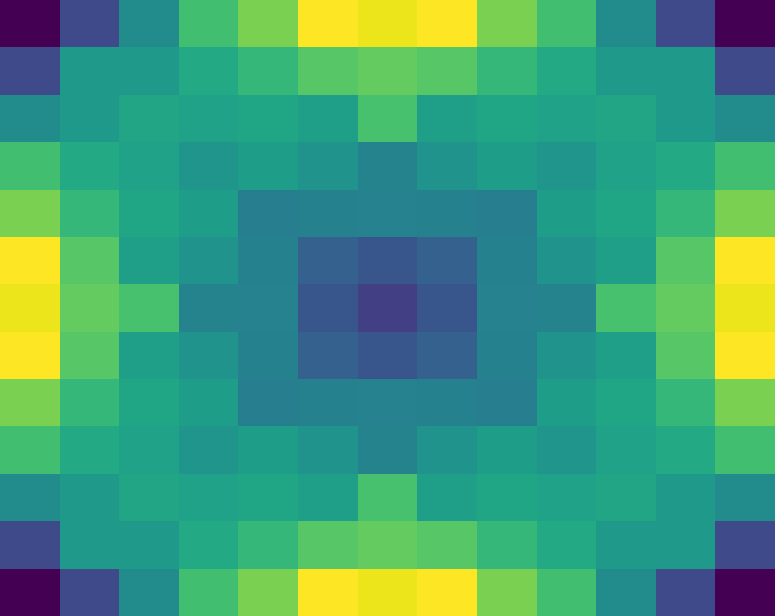}\end{minipage} &
        \begin{minipage}{0.18\textwidth}\centering \textbf{$n=14$}\\[2pt]
        \includegraphics[width=\linewidth]{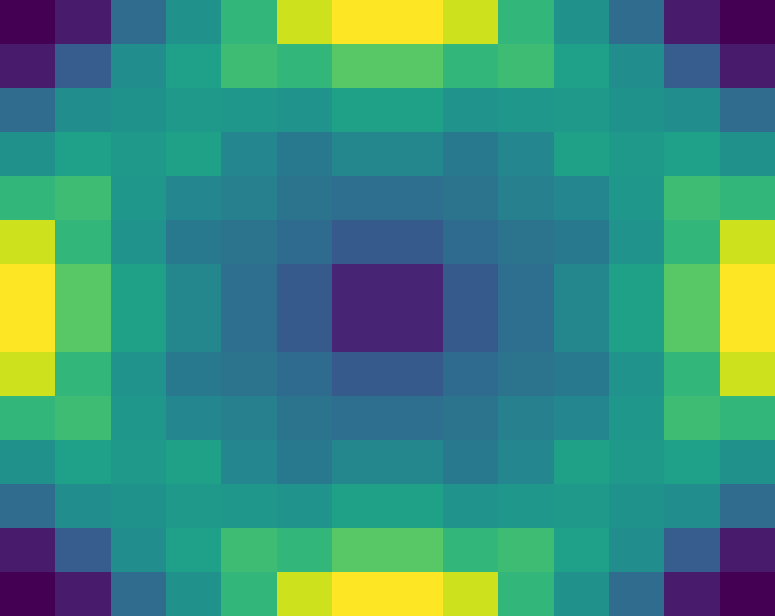}\end{minipage} \\
        \begin{minipage}{0.18\textwidth}\centering \textbf{$n=15$}\\[2pt]
        \includegraphics[width=\linewidth]{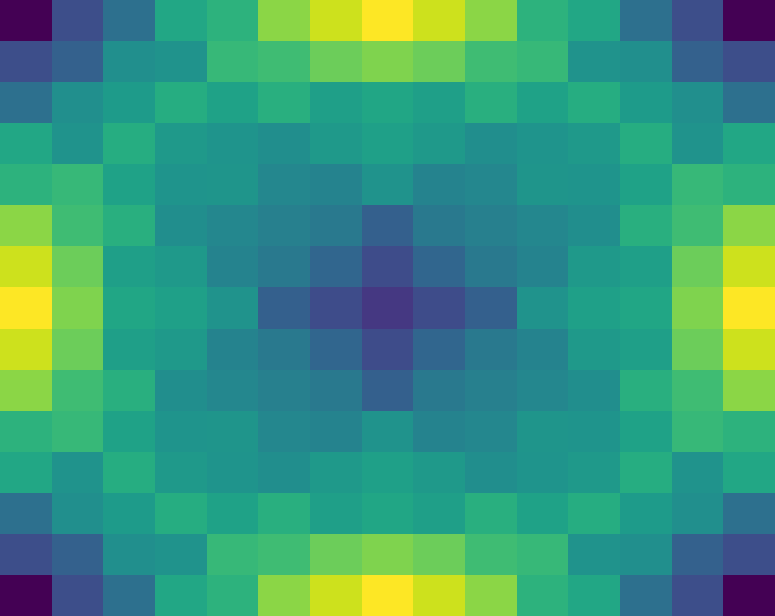}\end{minipage} &
        \begin{minipage}{0.18\textwidth}\centering \textbf{$n=16$}\\[2pt]
        \includegraphics[width=\linewidth]{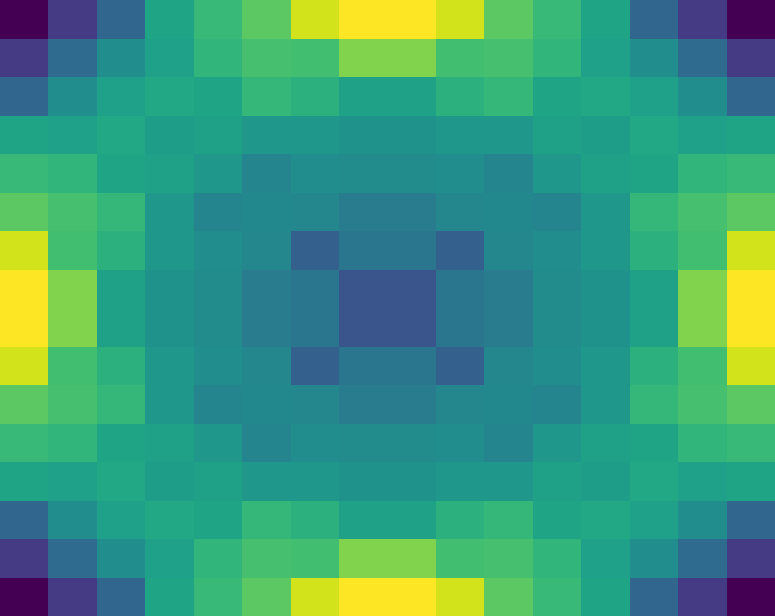}\end{minipage} &
        \begin{minipage}{0.18\textwidth}\centering \textbf{$n=17$}\\[2pt]
        \includegraphics[width=\linewidth]{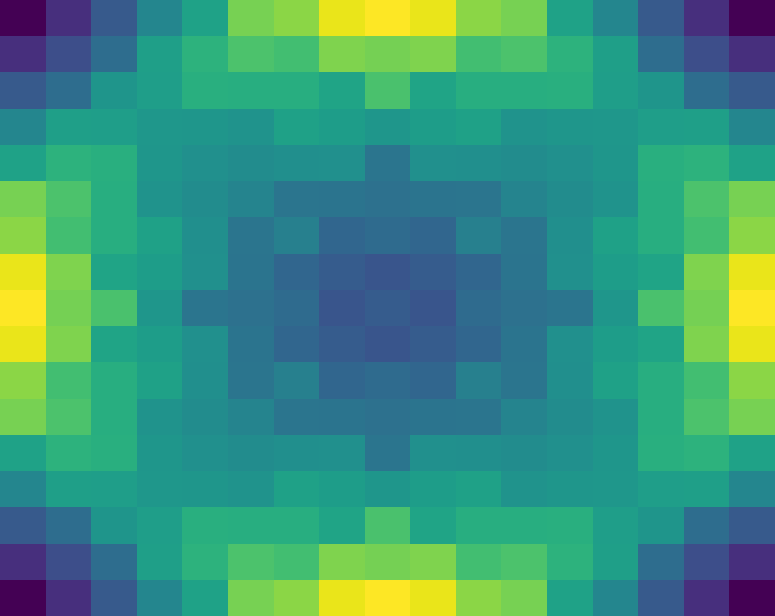}\end{minipage} &
        \begin{minipage}{0.18\textwidth}\centering \textbf{$n=18$}\\[2pt]
        \includegraphics[width=\linewidth]{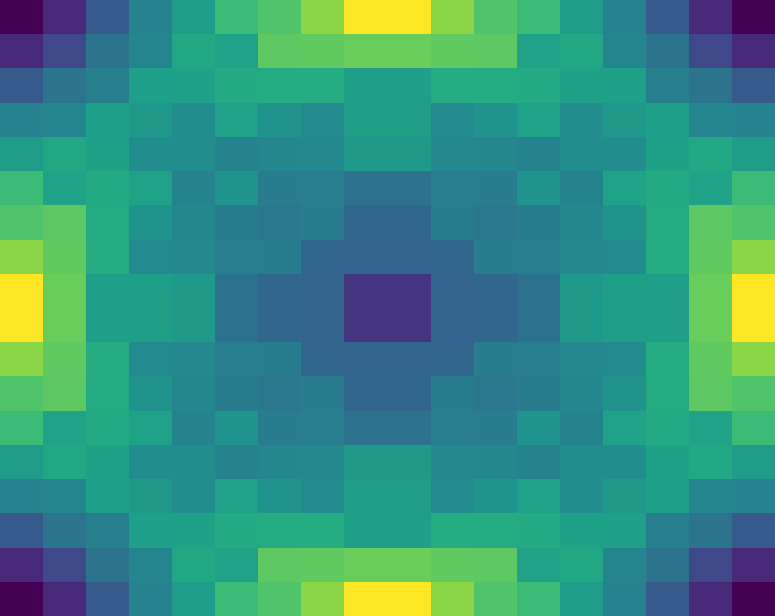}\end{minipage} &
        \begin{minipage}{0.18\textwidth}\centering \textbf{$n=19$}\\[2pt]
        \includegraphics[width=\linewidth]{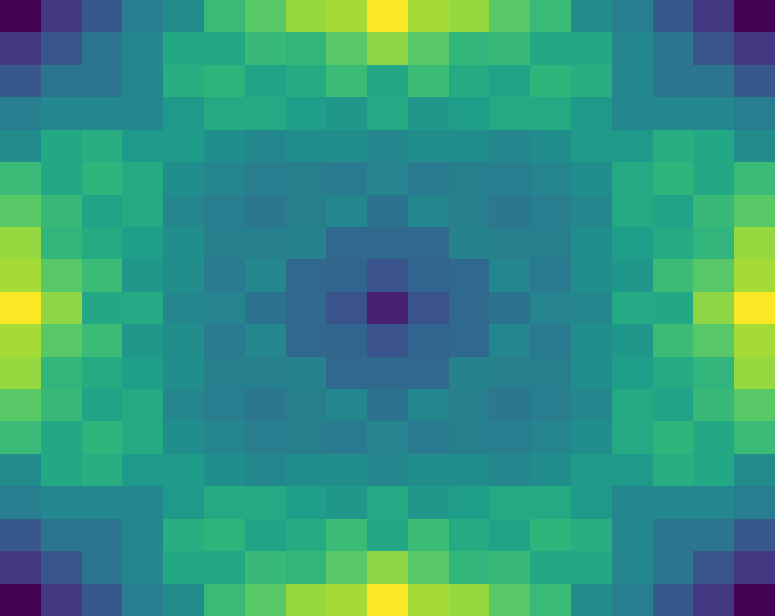}\end{minipage} \\
        \begin{minipage}{0.18\textwidth}\centering \textbf{$n=20$}\\[2pt]
        \includegraphics[width=\linewidth]{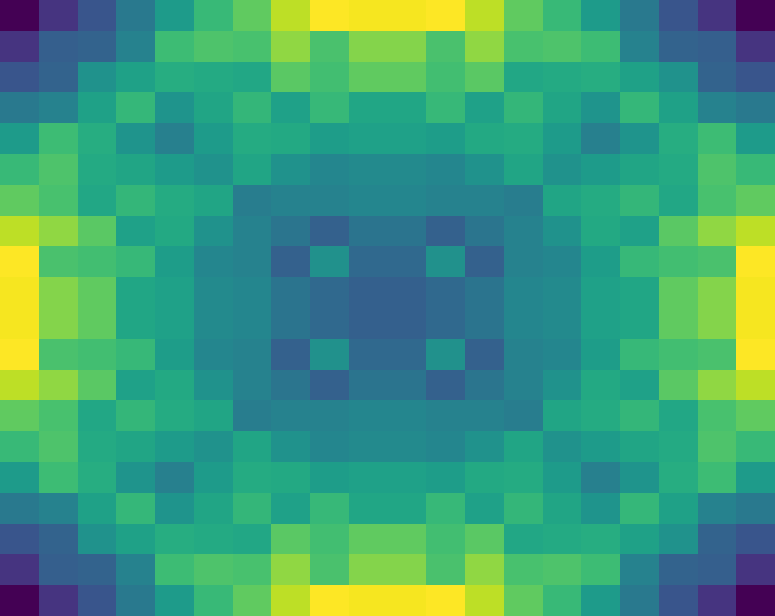}\end{minipage} &
        \begin{minipage}{0.18\textwidth}\centering \textbf{$n=21$}\\[2pt]
        \includegraphics[width=\linewidth]{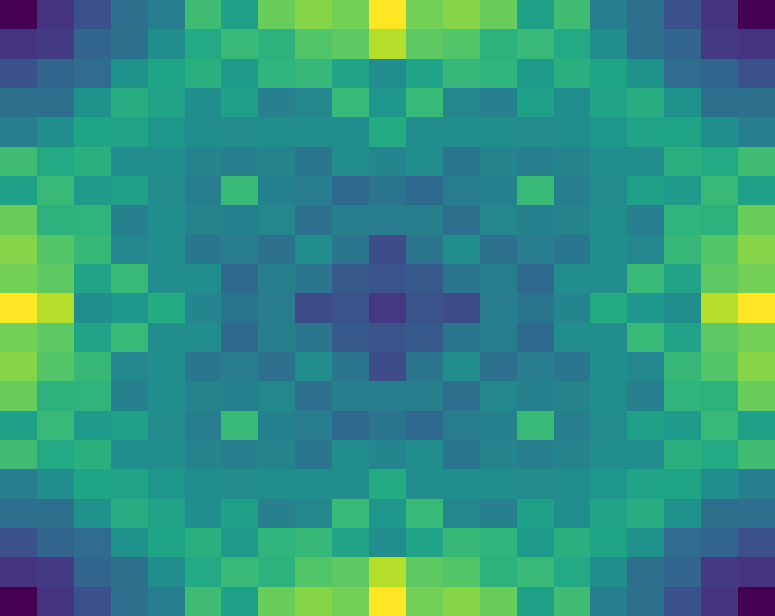}\end{minipage} &
        \begin{minipage}{0.18\textwidth}\centering \textbf{$n=22$}\\[2pt]
        \includegraphics[width=\linewidth]{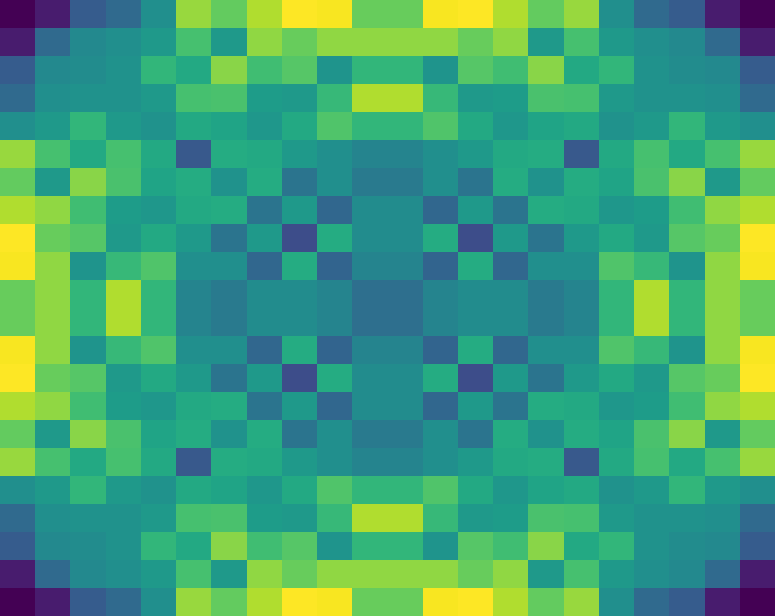}\end{minipage} &        
        \begin{minipage}{0.18\textwidth}\centering \textbf{$n=23$}\\[2pt]
        \includegraphics[width=\linewidth]{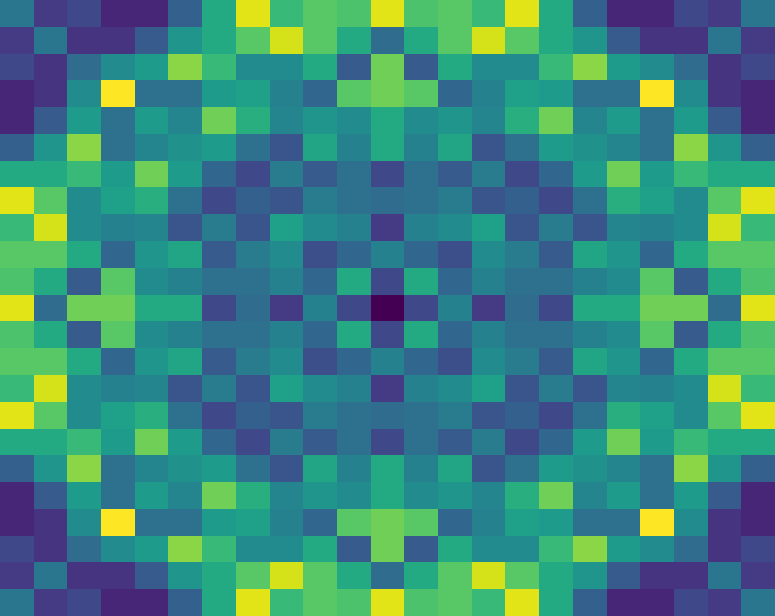}\end{minipage} &
        \begin{minipage}{0.18\textwidth}\centering \textbf{$n=24$}\\[2pt]
        \includegraphics[width=\linewidth]{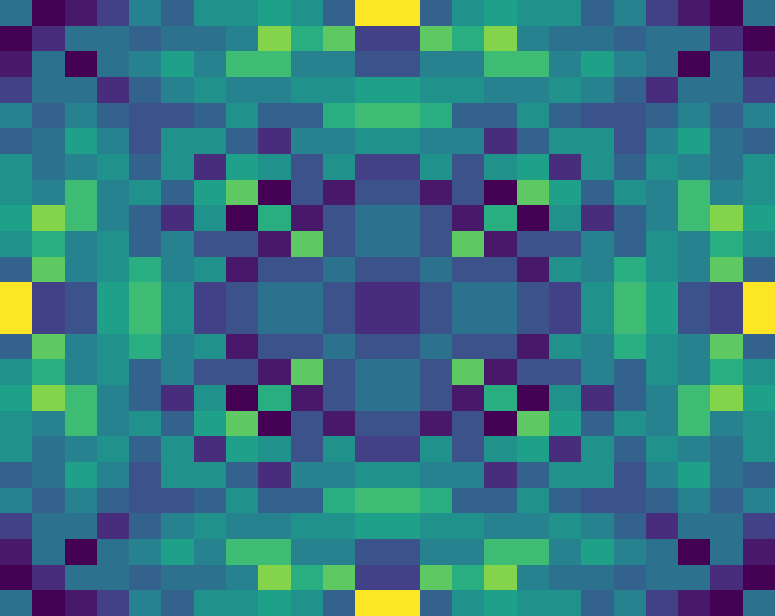}\end{minipage} \\
        \begin{minipage}{0.18\textwidth}\centering \textbf{$n=25$}\\[2pt]
        \includegraphics[width=\linewidth]{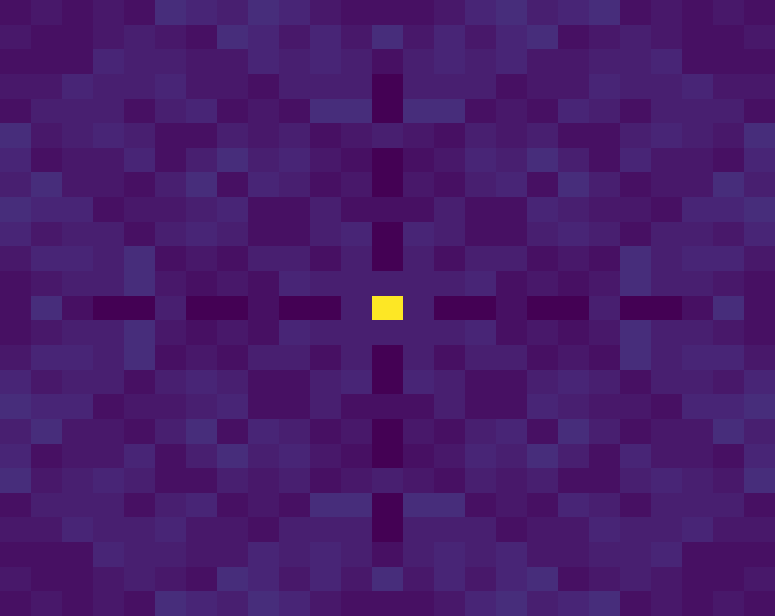}\end{minipage} &
        \begin{minipage}{0.18\textwidth}\centering \textbf{$n=26$}\\[2pt]
        \includegraphics[width=\linewidth]{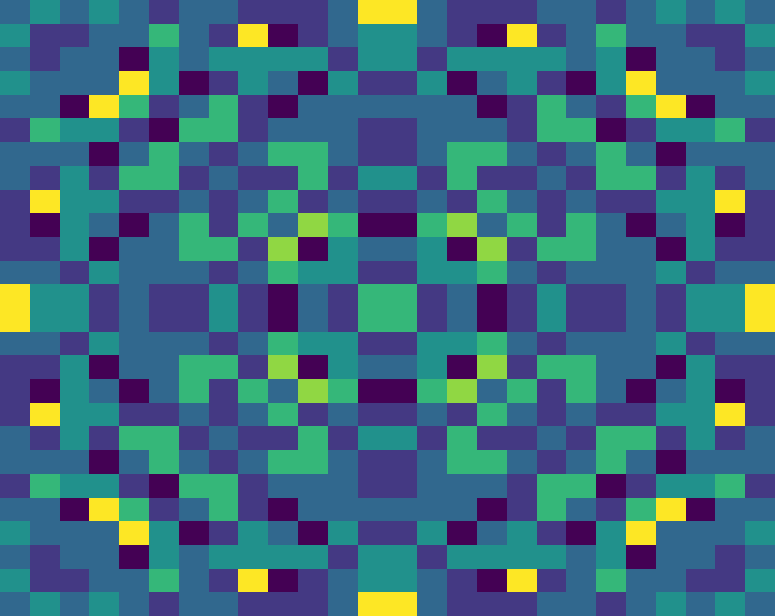}\end{minipage} &
        \begin{minipage}{0.18\textwidth}\centering \textbf{$n=27$}\\[2pt]
        \includegraphics[width=\linewidth]{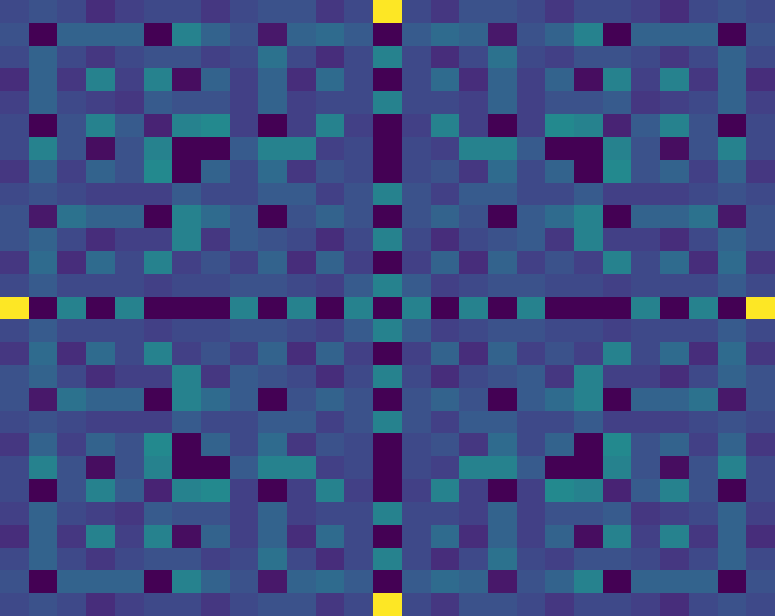}\end{minipage} &        
        \begin{minipage}{0.18\textwidth}\centering \textbf{$n=28$}\\[2pt]
        \includegraphics[width=\linewidth]{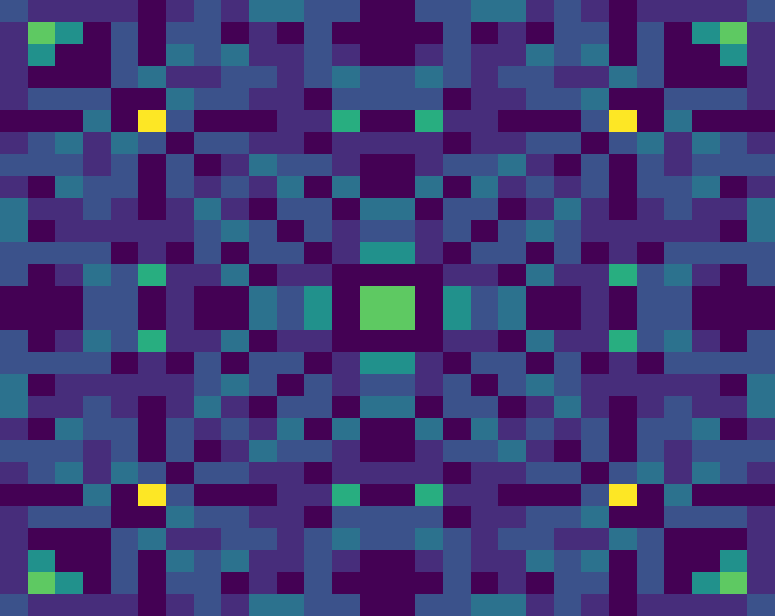}\end{minipage} &
        \begin{minipage}{0.18\textwidth}\centering \textbf{$n=29$}\\[2pt]
        \includegraphics[width=\linewidth]{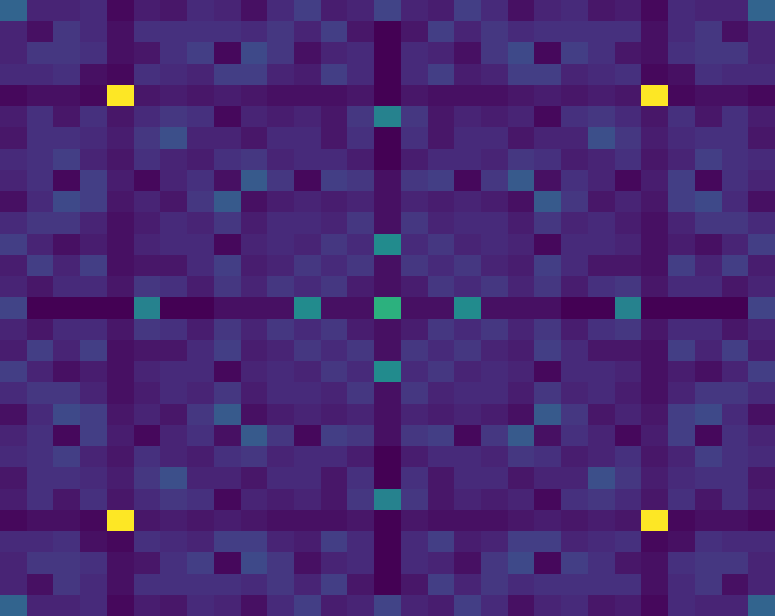}\end{minipage} \\
        \begin{minipage}{0.18\textwidth}\centering \textbf{$n=30$}\\[2pt]
        \includegraphics[width=\linewidth]{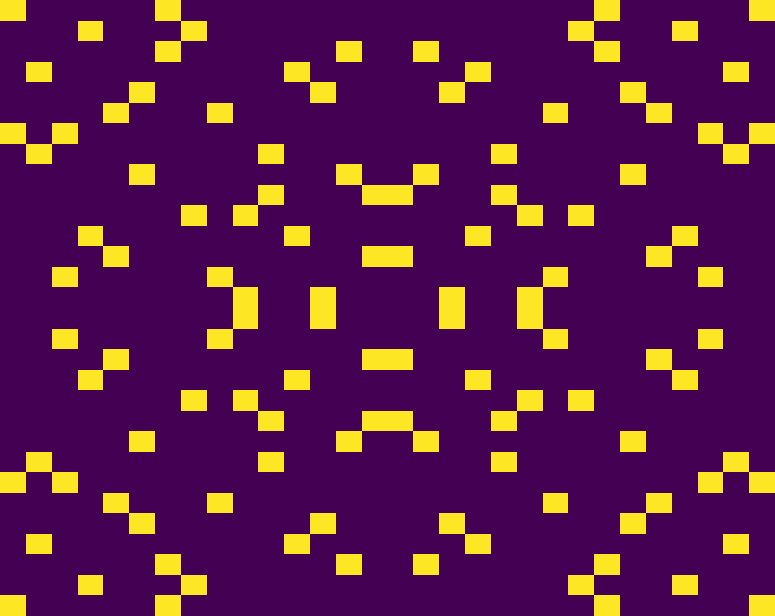}\end{minipage} &
        \begin{minipage}{0.18\textwidth}\centering \textbf{$n=45$}\\[2pt]
        \includegraphics[width=\linewidth]{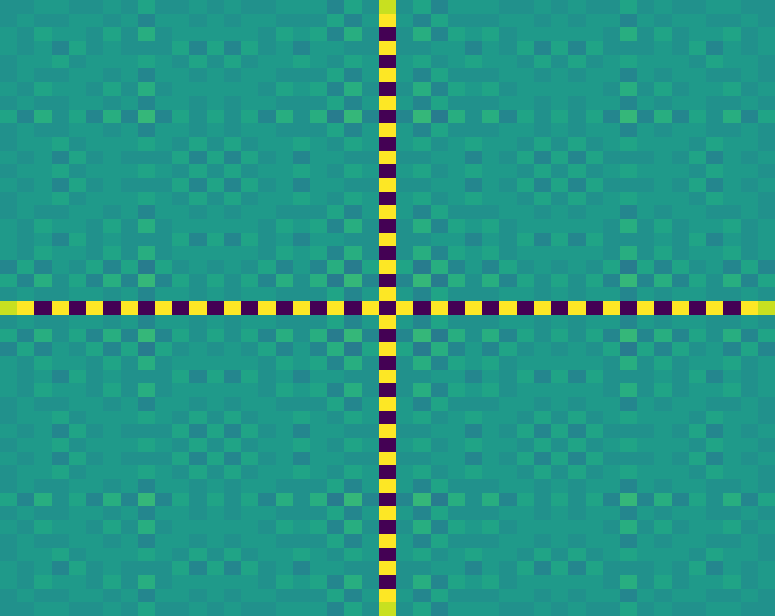}\end{minipage} &
        \begin{minipage}{0.18\textwidth}\centering \textbf{$n=176$}\\[2pt]
        \includegraphics[width=\linewidth]{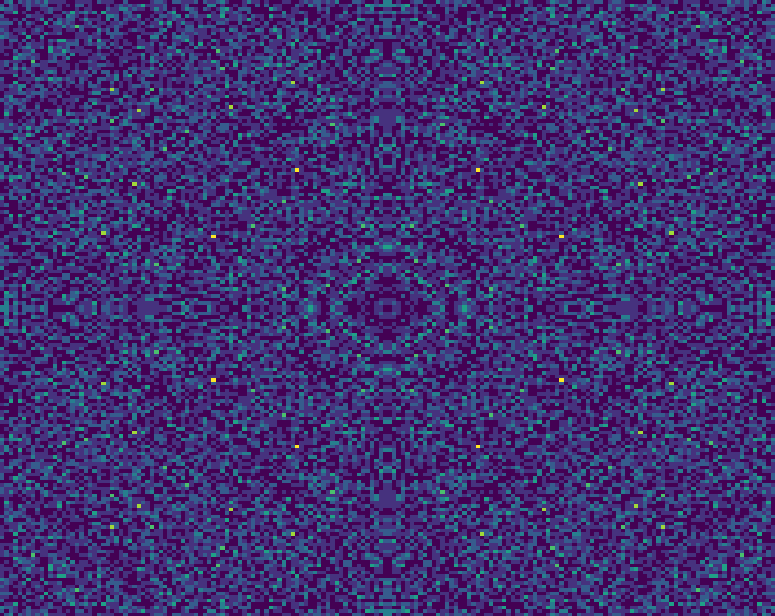}\end{minipage} &     
        \begin{minipage}{0.18\textwidth}\centering \textbf{$n=230$}\\[2pt]
        \includegraphics[width=\linewidth]{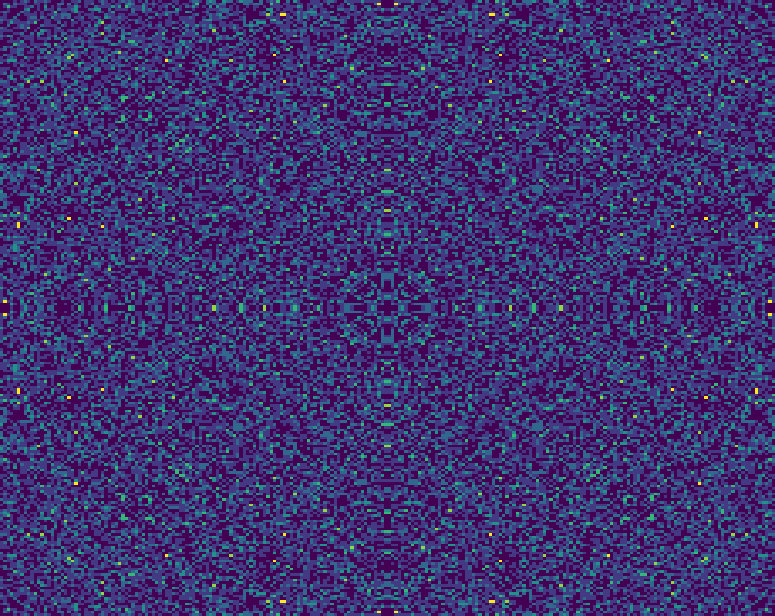}\end{minipage} &
        \begin{minipage}{0.18\textwidth}\centering \textbf{$n=248$}\\[2pt]
        \includegraphics[width=\linewidth]{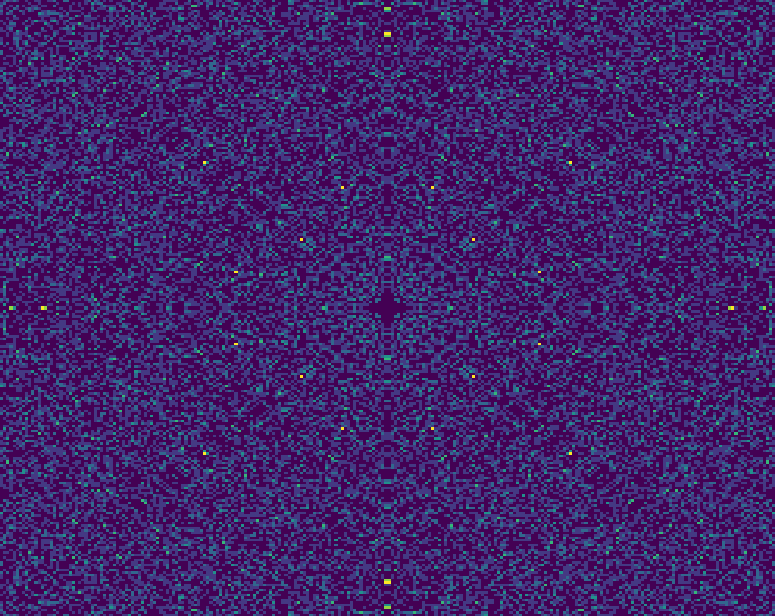}\end{minipage} \\
        \multicolumn{3}{c}{} 
    \end{tabular}
    \caption{Complete UCFMs ($F_n$) for $5 \leq n \leq 29$ (rows 1-5) and incomplete UCFMs for $n={30, 45, 176, 230, 248}$ (only row 6).}
    \label{fig:UCFM_all}
    \vspace{-0.5cm}
\end{figure*}

In Fig. \ref{fig:U/F_Results}, the mean runtimes (left axis, log scale) of the proposed reconstruction method and Russo's method in \cite{russo2010costas} are depicted for $4 \leq n \leq 17$. Each data point of this plot is the average of the $10$ runs for $n < 13$ and $3$ runs for $n \geq 13$ on a Dell PowerEdge T640 server, equipped with dual Intel Xeon Gold 2.1 GHz processors, and 128 GB RAM. The codes were implemented in Python 3.12, and no parallel processing was employed. Fig. \ref{fig:U/F_Results} (right axis) shows the improvement in the mean runtime of the proposed method, calculated as
\(
100 \times \bigl( 1 - E\{t_{U/F}\} / E\{t_{\text{Russo}}\} \bigr),
\)
where \(E\{\cdot\}\) is the expectation operator, and \(t_{U/F}\) and \(t_{\text{Russo}}\) are the runtimes of the proposed and Russo methods, respectively. 

For small orders ($4 \leq n \leq 9$), the performance is significantly higher with respect to Russo's method with a peak improvement of $59\%$ at $n = 5$. After this peak, the performance gain decreases but stabilizes around $15\%$ for $n \geq 13$. Even with only the exploitation of the first column of $F_n$ and the block structure of $U_n$, there is a significant improvement in the performance. While parallel processing can enhance performance, a significant algorithmic improvement can be achieved by fully exploiting the residual frequency structure in $F_n$ to reconstruct $U_n$.

\section{Discussion and Conclusion}
This paper presents a holistic approach using UCMs and UCFMs to analyze and discover new Costas arrays. Their structural properties and theoretical foundations are examined, and a reconstruction-based search method is proposed to obtain UCMs from UCFMs. Numerical results demonstrate significant improvements over the benchmark Russo method, highlighting the efficiency of the proposed approach. Furthermore, Costas arrays obtained through the proposed approach remain suitable for ISAC applications due to their ideal autocorrelation properties for sensing and low cross-correlation properties for multi-user communication.

Our observations and results suggest that any Costas array of order $n$ may exist as a row within a UCM of the same order, rather than independently. This insight indicates that a unified, matrix-based perspective may be more effective than classifying Costas arrays as algebraic or sporadic. Thus, it facilitates a potential column-by-column search method to reconstruct a UCM.

The proposed framework also provides a foundation for AI-driven Costas array discovery. By integrating UCMs/UCFMs with generative AI models, the search for new arrays can be further accelerated. Although data scarcity is a challenge, data augmentation, pre-trained models used for combinational or matrix-completion problems, and generative modeling offer promising strategies for  Costas array discovery.

\bstctlcite{IEEEexample:BSTcontrol}

\bibliographystyle{IEEEtran}
\vspace{-0.15cm}
\bibliography{IEEEabrv,ref_fg_costas}


\end{document}